\documentclass[journal,twoside,web]{ieeecolor}
\usepackage{generic}
\usepackage{cite}
\usepackage{amsmath,amssymb,amsfonts}
\usepackage{algorithmic}
\usepackage{graphicx}
\usepackage{algorithm,algorithmic}
\usepackage{hyperref}
\hypersetup{hidelinks=true}
\usepackage{textcomp}
\def\BibTeX{{\rm B\kern-.05em{\sc i\kern-.025em b}\kern-.08em
    T\kern-.1667em\lower.7ex\hbox{E}\kern-.125emX}}
\usepackage{booktabs}
\usepackage{siunitx}
\usepackage{nicefrac}
\usepackage{comment}
\usepackage{cite}
\usepackage{siunitx}
\newtheorem{theorem}{Theorem}

\usepackage[dvipsnames]{xcolor}

\newlength{\mlLegendThickness}
\newlength{\mlLegendHeight}
\definecolor{dunkelblau}{rgb}{0.0, 0.2314, 0.6196}%
\definecolor{hellblau}{rgb}{0.000, 0.7451, 1.0000}%
\definecolor{rot}{rgb}{0.6980,0.1333,0.1333}%

\newcommand{\mycomment}[1]{}

\begin{document}
\title{Asymmetry Demystified: Strict CLFs and \\  Feedbacks for Predator–Prey Interconnections}
\author{Miroslav Krstic%
\thanks{M. Krstic is with Department of Mechanical and Aerospace Engineering, University of California San Diego, La Jolla, CA 92093-0411, USA (e-mail: mkrstic@ucsd.edu).} }

\maketitle

\begin{abstract}
The difficulty with control of population dynamics, besides the states being positive and the control having to also be positive, is the extreme difference in the dynamics near extinction and at overpopulated states. As hard as global stabilization is, even harder is finding CLFs that are strict, don't require LaSalle arguments, and permit quantification of convergence. Among the three canonical types of two-population dynamics (mutualism, which borders on trivial, predator-prey, and competition, which makes global stabilization with positive harvesting impossible), predator-prey is the ``sweet spot'' for the study of stabilization. Even when the predator-prey interaction is neutrally stable, global asymptotic stabilization with strict CLFs has proven very difficult, except by conservative, hard-to-gain-insight-from Matrosov-like techniques. 

In this little note we show directions for the design of clean, elegant, insight-bearing, majorization-free strict CLFs. They generalize the classical Volterra-style Lyapunov functions for population dynamics to non-separable Volterra-style constructions. As a bonus to strictification as an analysis activity, we provide examples of concurrent designs of feedback and CLFs, using customized versions of forwarding and backstepping (note that, in suitable coordinates, predator-prey is both strict-feedforward and strict-feedback), where the striking deviations from these methods' conventional forms is necessitated by the predator-prey's states and inputs  needing to be kept positive. 
\end{abstract}

% \begin{IEEEkeywords}
% Enter key words or phrases in alphabetical order, separated by commas. Using the IEEE Thesaurus can help you find the best standardized keywords to fit your article. Use the thesaurus access request form for free access to the IEEE Thesaurus: \underline{https://www.ieee.org/publications/services/thesaurus-acce}\\
% \underline{ss-page.com.}
% \end{IEEEkeywords}

\section{Introduction}

\IEEEPARstart{T}{his}
%modestly aiming 
succinct note, deliberately light on remotely related literature, is focused on pursuing inspiration from two references by the author: the decade-old \cite{malisoff2016stabilization} and the very recent \cite{11080060}. Both papers deal with global stabilization of positive-state positive-input systems in the style of predator-prey dynamics. Paper \cite{malisoff2016stabilization} deals with the architecture where only the predator is harvested, while \cite{11080060} deals with the situation where the simultaneous, non-discriminating harvesting of both the predator and the prey is inevitable. The two structures give rise to different challenges in stabilization. In both structures, the construction of Lyapunov functions that are strict is one of the key challenges. 

Achieving stabilization with a Lyapunov function whose time derivative is only negative semidefinite is at least moderately concerning from the engineering point of view, because a nonstrict Lyapunov function offers neither a reassurance of robustness to even small modeling errors nor any quantitative measure of a convergence rate. Mathematically, the nonstrictness of a Lyapunov function is even more unsettling---the converse Lyapunov theorem guarantees the existence of a strict function; how close is one to such a valuable function that surely exists, and how does one find it, at least in a particular case, and preferably systematically in more general situations?

Those are the drivers of our study. 

Biological population dynamics are undeniably a secondary interest of ours. But the primary interest is nonlinear control design. The predator-prey model is a benchmark for a difficulty in strict CLF construction for positive-state positive-input systems that has heretofore not been overcome.\footnote{Except in a manner that is far from satisfactory, aesthetically and conceptually, in a single result \cite{malisoff2016stabilization}, where the Matrosov means of strictification are so complex and so conservative that neither further interest nor replication has followed.} So, the reader should not seek biological fidelity and the capturing of details of predator-prey dynamics like open-loop limit cycles. Our goal is to present something as clean, transparent, and simple as possible, to chart a structural path for a possible further investigation of strict CLF-equipped stabilizer designs for actual biological (ecological, biotechnological, epidemiological, etc.), economic, and other population dynamics. 

The inspiration for our work here also comes from the CLF strictification for the SIR dynamics in \cite{9740520}. Our predator-prey dynamics are essentially the (S,I) portion of the SIR, where I\,=\,predator  and S\,=\,prey. However, the harvesting in this paper enters the model bilinearly, whereas the vaccination in \cite{9740520} enters additively. The strict CLFs we supply here are definitely distinct from the one in \cite[(38), (37)]{9740520} and lead to much simpler proofs, in a couple lines. 

{\bf What does the note {\em contribute}?} 
\begin{enumerate}
\item 
New ideas---for population dynamics---and perhaps other systems with positive states and inputs, for how to design CLFs that are {\em strict, non-Matrosov, and elegant}. These ideas draw on Lyapunov strictifications for mechanical systems, including even fluid mechanics. In our liquid-tank stabilization \cite{9744516},  we specifically combine the total energy \cite[(18)]{9744516}, which is a non-strict CLF,  with another backstepping-inspired non-strict Lyapunov functional \cite[(19)]{9744516}, for strictness of the composite Lyapunov functional. 

Of course, in populations, the challenges are  different than in fluids or robotics, due to the positivity of all the species states and the bilinear nature of the coupling of species. In particular, the backstepping transformation we propose for the synthesis of the second Lyapunov function takes the form of a {\em ratio} of population concentrations. And the Lyapunov augmentations that achieve crossterm cancellations in the derivative of the secon Lyapunov function not only vastly differ from trigonometric nonlinearities in robotics or quadratic nonlinearities for fluids but also from the {\em separable} Volterra nonlinearities encountered in the existing population dynamics literature. 

To summarize, the CLFs we offer differ from what is available in the literature---they are, all at once, (1) non-Matrosov, (2) non-separable, (3) explicit,  (4) insight-bearing. 

%It is premature in the Introduction to get into the details of the ideas proposed, and they are summarized in the Conclusions, but let us say informally that the approach proposed supplements a standard non-strict Lyapunov construction with an additional Lyapunov function that exposes dissipation of a state not seen by the first candidate. This is achieved by two augmentations: by introducing a ratio of population concentrations---playing a role akin to a backstepping transformation---and by adding a special crossterm-canceling function of the state whose dissipation is missed by the basic Lyapunov function. 

\item While the note's principal objective is not control design but primarily CLF strictification under prespecified controllers, we do explore the concurrent control and CLF design, and elucidate that as hard as global stabilization is with positive-state restrictions, stabilization with  the positive-input restriction is much harder. We design CLFs and then show how they ``break'' when control is restricted to be positive. And then, finally, we demonstrate the CLF design under positive harvesting to actually be possible---with unconventional backstepping. 
\end{enumerate}

If this paper's ideas do carry general {potential}---which only future will tell---they may permit generalizations, mentioned in the Conclusions. The opportunities are beyond population dynamics: for systems with positive states and controls,  bilinear in the input, and  with severe asymmetries in dissipation. 

Indeed, the paper is written primarily for nonlinear control researchers, using a population model as a benchmark. 

\underline{Organization.} Following the problem formulation in Section \ref{sec-formulation}, the CLF constructions under predator-only and simultaneous harvesting structures are given, respectively, in Sections \ref{sec-sf} and \ref{sec-both}. 
%Since backstepping-style coordinate changes are crucial in this approach to strictification, i
Integrator forwarding and backstepping, as approaches for simultaneous CLF and feedback design, are explored in a ``bonus'' Section \ref{sec-bonus}. The majority of the globally strict CLFs are accompanied by feedback laws that employ negative values on a  fraction of the state space, when the harvested predator is in deficit relative to the prey. While this  illustrates the difficulty entailed in designing a CLF that is admissible only for positive inputs, we present one ``small triumph'' of a global strict-CLF+positive-feedback construction.

\section{Problem Formulation}
\label{sec-formulation}

We study two models,
\begin{subequations}
\label{pp-sf}
\begin{eqnarray}
\label{pp-sf1}
\dot X &=& (1-Y)X
\\
\label{pp-sf2}
\dot Y &=& (X-U)Y
\end{eqnarray}
\end{subequations}
and
\begin{subequations}
\label{pp-both}
\begin{eqnarray}
\label{pp-both1}
\dot X &=& (2-Y-U)X
\\
\label{pp-both2}
\dot Y &=& (X-U)Y\,,
\end{eqnarray}
\end{subequations}
where, in both models, $X\geq 0$ is the concentration of the prey, $Y\geq 0$ the concentration of the predator, and $U\geq 0$ the harvesting (removal) {\em rate}. In both models, the prey feeds, and reproduces, on a constant external resource whose strength is 1, and the predator feeds and reproduces on the prey concentration $X$. 

The key difference in the models is that in the model \eqref{pp-sf} the control is able to harvest the predator only, whereas in the model \eqref{pp-both} inevitably harvests both populations and, intuitively, to achieve a stable balance at non-extinct equilibrium, has to be ``more cautious.''

Among the several reasons for choosing the benchmark systems \eqref{pp-sf} and \eqref{pp-both} as we do, one of the reasons is that, for a constant input $U=1$, the two systems are identical. They are both $\dot X = (1-Y)X, \dot Y = (X-1)Y$, have $X=Y=1$ as a globally stable equilibrium in the first quadrant, fail to converge to the equilibrium (i.e., they require stabilizing feedback), and their trajectories flow along the invariants 
\begin{equation}
\label{eq-invariants}
X + Y - \ln(XY) = X_0+Y_0 - \ln(X_0 Y_0)\,, 
\end{equation}
displayed in Figure \ref{fig:open-loop}. 

\begin{figure}[t]
\centering
\includegraphics[width=.7\columnwidth]{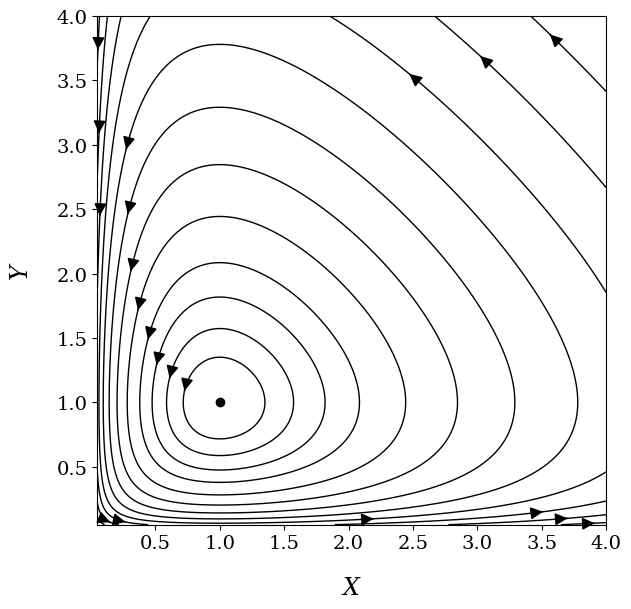}
\caption{$V(X,Y)$ in \eqref{clf1}. Open-loop trajectories for the systems \eqref{pp-sf} and \eqref{pp-both} for $U=1$.}
\label{fig:open-loop}
\end{figure}

This paper is not about designing the best possible controller. One can find a study at designing various controllers and understanding the tradeoffs among them in \cite{11080060}. The paper is, instead, about taking the simplest controller that does the job of global stabilization, and uses positive harvesting only, and finding the corresponding strict CLF. 

What would be the ``simplest controller'' that does the job? We take that to be a feedback that employs just linear feedback of the predator concentration. For the model  \eqref{pp-sf} we take the simplest controller as
\begin{equation}
\label{contro-sf}
U=Y
\end{equation}
and for the model 
\eqref{pp-both}
\begin{equation}
\label{control-both}
U=1+\varepsilon(Y-1)\,, \qquad \varepsilon\in(0,1)
\end{equation}
where $\varepsilon\in(0,1)$ ensures that $U>0$. The equilibrium value for both controllers is $U^*=1$, achieving $X^*=Y^*=1$. 

It is probably clear to the reader that we are taking nearly all the coefficients of the plant and the controller as unity. This is on purpose. We want to offer results in which the nonlinear structure is put on display as clearly as possible, uncontaminated by half a dozen various positive coefficients. A skilled reader won't have a problem to generalize our unity-coefficient results. 

Throughout the paper we use the 
% monotonically increasing function
% \begin{equation}
% \label{phi}
% \phi(s) = {\rm e}^s -1
% \end{equation}
standard Volterra Lyapunov function building block 
\begin{eqnarray}
\label{eq-Volterra-Lyapunov}
\Psi(S) &=& S-1 - \ln (S)
\nonumber\\
&=&(S-1)^2\underbrace{\displaystyle\int_{0}^{1}\frac{1-\theta}{\bigl(1+\theta(S-1)\bigr)^{2}} \, d\theta}_{>0}\,,
\end{eqnarray}
which is positive definite at $S=0$ and radially unbounded over $S>0$. 
% and its positive definite, radially unbounded derivative 
% \begin{equation}
% \Phi(s) = \int_0^s \phi(\sigma) d\sigma = \phi(s) - s\,.
% \end{equation}

\section{CLF for Predator-Prey Model with Predator Harvesting Only}
\label{sec-sf}

Before presenting a strict CLF, we note that, along the solutions of  \eqref{pp-sf}, \eqref{contro-sf}, the function 
\begin{equation}
\label{eq-V1-sf}
V_1(X,Y) = \Psi(X) + \Psi(Y)\,,
\end{equation}
which is our {\em non-strict} CLF \cite[(2)] {malisoff2016stabilization}, has a derivative
\begin{equation}
\dot V_1 = - (Y-1)^2
\end{equation}
and global asymptotic stability at $X=Y=1$ follows from the Barbashin-Krasovskii theorem, with the entire quadrant $\{X,Y>0\}$ as the region of attraction. 

At the risk of stating the obvious, both the non-strict CLF \eqref{eq-V1-sf} and the strict CLFs that follow throughout the note are in fact {\em (asymmetric) barrier Lyapunov functions},  for a system whose state space is $\mathbb{R}_+^2$ and the equilibrium is $X=Y=1$. 

\begin{figure}[t]
\centering
\includegraphics[width=.7\columnwidth]{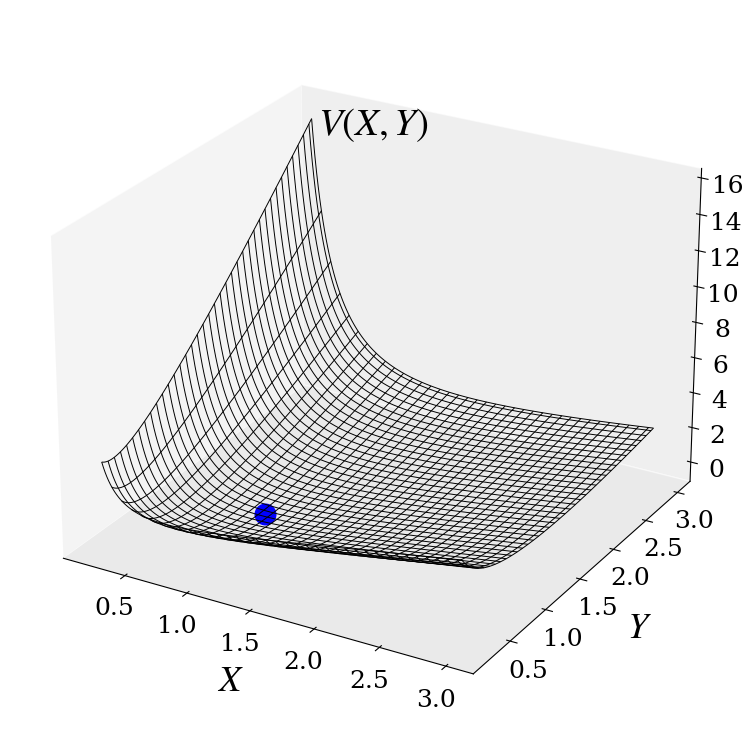}
\caption{$V(X,Y)$ in \eqref{clf1}. Positive definite and radially unbounded. Blue dot: equilibrium at $X=Y=1$.}
\label{fig:V-sf}
\end{figure}

\begin{theorem}
\label{thm-sf}
For the system \eqref{pp-sf}, the function
\begin{equation}
V=V_1+V_2\,,
\end{equation}
where
\begin{equation}
\label{V2}
V_2(X,Y) = \Psi(1/X) + \Psi(Y/X)\,,
\end{equation}
namely,
\begin{equation}
\label{clf1}
V={\color{blue}\frac{(X-1)^2}{X }}+ Y-1 - \ln(Y) + {\color{blue}\frac{Y}{X}-1 -\ln\left(\frac{Y}{X}\right)}
\end{equation}
is a strict CLF over $\{X>0,Y>0\}$.
\end{theorem}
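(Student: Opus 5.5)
The plan is to verify the three properties that make $V$ a strict CLF: positive definiteness at $(1,1)$, radial unboundedness (in the barrier sense on $\{X>0,Y>0\}$), and negative definiteness of $\dot V$ along \eqref{pp-sf} in closed loop with the feedback \eqref{contro-sf}, $U=Y$. The only real work is computing $\dot V_2$. Everything is done in the logarithmic-derivative form: for any positive quantity $S(t)$, $\frac{d}{dt}\Psi(S)=(S-1)\,\dot S/S$.

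\textbf{Step 1 (form of $V$).} I first confirm the closed form \eqref{clf1}. Since $\Psi(X)+\Psi(1/X)=X+1/X-2=(X-1)^2/X$, the logarithms cancel and the blue terms follow. Positivity and properness then come from the Volterra block \eqref{eq-Volterra-Lyapunov}. $V_1$ alone is already positive definite at $(1,1)$ and blows up as $X\to0^+$, $Y\to0^+$, $X\to\infty$ or $Y\to\infty$. $V_2\ge0$ only adds to this.

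\textbf{Step 2 (derivatives).} From the closed-loop equations, $\dot X/X=1-Y$ and $\dot Y/Y=X-Y$. The computation recorded before the theorem gives $\dot V_1=-(Y-1)^2$.

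For $V_2$, the key observation is that the ratio $Z=Y/X$ (the backstepping-like variable) obeys
\begin{equation*}
\dot Z/Z=(X-Y)-(1-Y)=X-1 .
\end{equation*}
So the $Y$-dependence cancels exactly, and $\frac{d}{dt}\Psi(Y/X)=(Y-X)(X-1)/X$. For the other term, $1/X$ has logarithmic derivative $-(1-Y)$, which gives $\frac{d}{dt}\Psi(1/X)=(1-Y)(X-1)/X$. Adding the two, the cross terms in $Y$ cancel:
\begin{equation*}
\dot V_2=\frac{(X-1)\bigl[(1-Y)+(Y-X)\bigr]}{X}=-\frac{(X-1)^2}{X}.
\end{equation*}

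\textbf{Step 3 (conclusion).} Combining the two pieces,
\begin{equation*}
\dot V=-(Y-1)^2-\frac{(X-1)^2}{X}.
\end{equation*}
This is negative definite on $\{X>0,Y>0\}$ and vanishes only at $(1,1)$. Hence $V$ is a strict CLF, and no LaSalle or Barbashin--Krasovskii argument is needed. If a rate estimate is wanted, I would compare $\dot V$ with $V$ near the equilibrium using the integral representation in \eqref{eq-Volterra-Lyapunov}.

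The only step needing care is the $V_2$ computation. The cancellation of $Y$ in $\dot Z/Z$, and then of the cross terms between the two pieces of $V_2$, is what makes the construction work. I would double-check the signs there. Everything else is routine.
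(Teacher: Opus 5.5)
Your proposal is correct and follows the paper's proof. The paper also dismisses positive definiteness and radial unboundedness as straightforward and simply states $\dot V = -\frac{(X-1)^2}{X} - (Y-1)^2$ under $U=Y$. Your logarithmic-derivative computation, using $\dot Z/Z = X-1$ for $Z=Y/X$, supplies the details the paper leaves implicit, and your signs are correct.
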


\begin{figure}[t]
\centering
\includegraphics[width=.85\columnwidth]{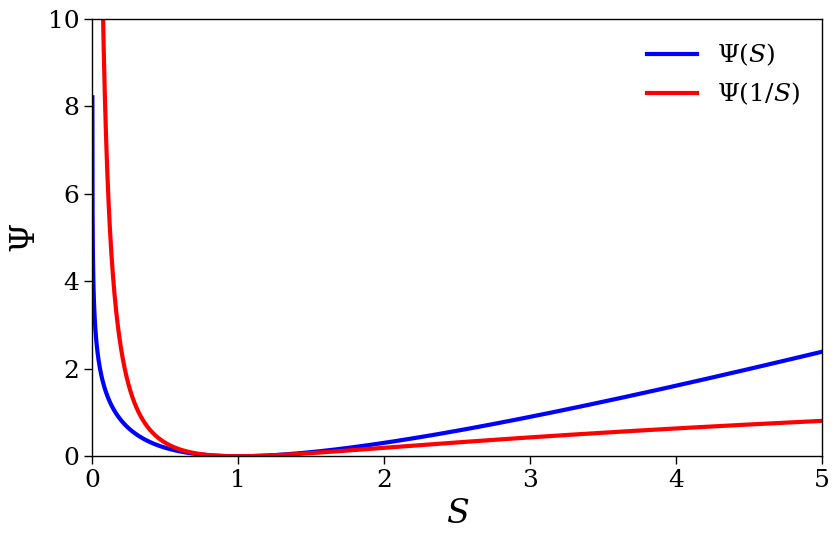}
\caption{The barrier Lyapunov functions $\Psi(S) = S-1-\ln(S)$ (blue) and $\Psi(1/S) = 1/S - 1+ \ln(S)$ (red). Both are highly asymmetric but the red dominates on the interval $(0,1)$ at the rate ($1/S$, whereas the blue dominates on $(1,\infty)$ at the rate $S$. The logarithmic growth is always weaker---both near $S=0$ and for large $S$.}
\label{fig:Psi}
\end{figure}

\begin{proof}
The positive definiteness and the radial unboundedness of $V(X,Y)$ over $\{X>0,Y>0\}$ are straightforward. \footnote{And illustrated in Figure \ref{fig:V-sf}.} The derivative
\begin{equation}
\dot V = - \frac{(X-1)^2}{X }- (Y-1)^2
\end{equation}
is negative definite over the entire set. Hence $V$ is strict. 
\end{proof}

Since, in the Lyapunov candidates \eqref{eq-V1-sf} and \eqref{V2}, both the barrier Lyapunov functions $\Psi(X)$ and $\Psi(1/X)$ appear, it is useful to first see, plainly, in what manner they are both positive definite and radially unbounded, and, second, to  note the difference in their growth rates and the forms of asymmetry relative to $X=1$. This is all displayed in Figure \ref{fig:Psi}.

The difference in simplicity between the CLF \eqref{clf1} and the CLF \cite[(12)]{malisoff2016stabilization} is striking.

Analyses of population dynamics are often conducted in the variables
\begin{equation}
    x=\ln(X)\,, \quad y=\ln(Y)
\end{equation}
with the model \eqref{pp-sf} replaced by
\begin{subequations}
\label{pp-sf-ln}
\begin{eqnarray}
\label{pp-sf1-ln}
\dot x &=& -\phi(y)
\\
\label{pp-sf2-ln}
\dot y &=& \phi(x) - u
\end{eqnarray}
\end{subequations}
where $u=U-1=\phi(y)$ and
\begin{equation}
\label{phi}
\phi(s) = {\rm e}^s -1\,.
\end{equation}
Defining
\begin{equation}
\Phi(s) = \int_0^s \phi(\sigma) d\sigma = \phi(s) - s\,,
\end{equation}
the CLF $V$ is rewritten as
\begin{equation}
\label{clf1-ln}
V = \Phi(x)+\Phi(-x) + \Phi(y) +\Phi(y-x)\,. 
\end{equation}

The high asymmetry of the dependence of $V$ on both the small and large values of $X$ and $Y$, and on $X$ versus $Y$ is evident both from \eqref{clf1} and \eqref{clf1-ln}, as well as in Figure \ref{fig:V-sf}. The sensitivity of $V(X,Y)$ is the highest to the prey concentration $X$ near prey extinction---a finite distance from equilibrium; the lowest to the harvested predator concentration $Y$, when the predator population is abundant and unlimited in possible further growth. This asymmetry in $V $ also reflects the asymmetry in only the prey being harvested and measured.  

In the language of \cite{9740520}, the CLF \eqref{clf1} is clearly not ``separable,'' given its term $Y/X$. 

One wants to note that  predator-to-prey ratio $Y/X$ in   \eqref{clf1}, as well as $y-x$ in \eqref{clf1-ln}, is a form of backstepping transformation, of the predator concentration modified by the prey concentration. This is a significant difference from the strictification approach in  \cite{9740520,ItoMalMaz}, as well as the Matrosov-based approach in \cite{malisoff2016stabilization}. 

\section{CLF for Simultaneous  Harvesting of Predator and Prey}
\label{sec-both}

We consider the dynamics \eqref{pp-both} with the feedback \eqref{control-both} and $\varepsilon\in(0,1)$, namely, 
\begin{subequations}
\label{pp-both-closed}
\begin{eqnarray}
\label{pp-both1-closed}
\dot X &=& -(1+\varepsilon)(Y-1)X
\\
\label{pp-both2-closed}
\dot Y &=& [X-1-\varepsilon(Y-1)]Y\,,
\end{eqnarray}
\end{subequations}
This is a more challenging system for a strict CLF construction. A non-strict CLF is somewhat simple,
\begin{equation}
V_1(X,Y) = \Psi(X) + (1+\varepsilon) \Psi(Y)\,
\end{equation}
and has a derivative
\begin{equation}
\dot V_1 = -(1+\varepsilon)\varepsilon(Y-1)^2\,,
\end{equation}
which is negative semidefinite. 

\begin{theorem}
\label{thm-both}
For the system \eqref{pp-both}, the function
\begin{equation}
V=V_1+V_2\,,
\end{equation}
where
\begin{equation}
\label{V2-both}
V_2(X,Y) = \Pi(X)+\Psi\left(\frac{Y}{X^{\frac{\varepsilon}{1+\varepsilon}}}\right)\,,
\end{equation}
where
\begin{eqnarray}
\label{Pi(X)}
\Pi(X) &=&\frac{1}{X^{\frac{\varepsilon}{1+\varepsilon}}}\left[ X-1-\frac{1+\varepsilon}{\varepsilon} \left(X^{\frac{\varepsilon}{1+\varepsilon}}-1\right) \right] 
\nonumber \\
&=&(X-1)^2\underbrace{\displaystyle\int_{0}^{1}\frac{1-\theta}{\bigl(1+\theta(X-1)\bigr)^{\frac{1+2\varepsilon}{1+\varepsilon}}}d\theta}_{>0}\,,\qquad
\end{eqnarray}
namely,
\begin{eqnarray}
\label{clf2}
V &=& (X-1)-\ln X
+(1+\varepsilon)\big[(Y-1)-\ln Y\big] 
\nonumber\\ &&
+{\color{blue}\frac{1}{X^{\frac{\varepsilon}{1+\varepsilon}}}
\left[
X-1-\frac{1+\varepsilon}{\varepsilon}
\left(X^{\frac{\varepsilon}{1+\varepsilon}}-1\right)
\right] }
\nonumber\\ &&
+{\color{blue}\left[
\frac{Y}{X^{\frac{\varepsilon}{1+\varepsilon}}}-1
-\ln\left(\frac{Y}{X^{\frac{\varepsilon}{1+\varepsilon}}}\right)
\right]}
\end{eqnarray}
is a strict CLF over $\{X>0,Y>0\}$.
\end{theorem}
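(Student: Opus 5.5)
The plan is to follow the pattern of the proof of Theorem~\ref{thm-sf}. I will show that $V=V_1+V_2$ is positive definite and radially unbounded (in the barrier sense) on $\{X>0,Y>0\}$. Then I will compute $\dot V$ along the closed loop \eqref{pp-both-closed} and show it is negative definite. Throughout I abbreviate $a=\frac{\varepsilon}{1+\varepsilon}\in(0,\tfrac12)$, so that $1-a=\frac{1}{1+\varepsilon}$ and $a(1+\varepsilon)=\varepsilon$.

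\emph{Positivity and radial unboundedness.} The term $V_1=\Psi(X)+(1+\varepsilon)\Psi(Y)$ is already positive definite at $(1,1)$ and blows up as $X\to0^+$, $X\to\infty$, $Y\to0^+$ or $Y\to\infty$. The added term $\Psi(Y/X^{a})$ is nonnegative. By the integral representation in \eqref{Pi(X)}, $\Pi(X)\ge 0$ with equality only at $X=1$. So $V\ge V_1$, and both properties are inherited from $V_1$. The only routine check is the integral identity in \eqref{Pi(X)}, which I would take as given from the statement; alternatively, I would verify $\Pi(1)=\Pi'(1)=0$ together with the sign of $\Pi'$ computed below.

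\emph{Derivative of $V_2$ (the key step).} The backstepping-like ratio $Z=Y/X^{a}$ is where the exponent choice enters. I compute
\[
\frac{\dot Z}{Z}=\frac{\dot Y}{Y}-a\frac{\dot X}{X}=X-1-\varepsilon(Y-1)+a(1+\varepsilon)(Y-1)=X-1 ,
\]
so the $Y$-dissipation cancels exactly and $\frac{d}{dt}\Psi(Z)=(Z-1)(X-1)$. Writing $\Pi(X)=X^{1-a}-X^{-a}-\frac1a(1-X^{-a})$ gives
\[
\Pi'(X)=(1-a)X^{-a-1}(X-1).
\]
Hence
\[
\dot\Pi=\Pi'(X)\dot X=-X^{-a}(X-1)(Y-1).
\]
Adding the two pieces, the $Y X^{-a}$ terms cancel:
\[
\dot V_2=(X-1)\bigl[YX^{-a}-1-X^{-a}Y+X^{-a}\bigr]=-(X-1)\bigl(1-X^{-a}\bigr).
\]
This explains the role of $\Pi$: it is the crossterm-cancelling augmentation that removes the indefinite product $(X-1)(Y-1)X^{-a}$ produced by $\Psi(Z)$.

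\emph{Conclusion.} Combining with $\dot V_1=-(1+\varepsilon)\varepsilon(Y-1)^2$ gives
\[
\dot V=-(1+\varepsilon)\varepsilon(Y-1)^2-(X-1)\bigl(1-X^{-a}\bigr).
\]
Since $a>0$, the factor $1-X^{-a}$ has the same sign as $X-1$ and vanishes only at $X=1$. Therefore $\dot V<0$ for every $(X,Y)\neq(1,1)$ in the open quadrant, so $V$ is strict.

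The main obstacle is the design step of seeing that the exponent $a=\varepsilon/(1+\varepsilon)$ makes $\dot Z/Z$ independent of $Y$, and that $\Pi$ is exactly the antiderivative cancelling the resulting crossterm. Once these are identified, the verification is the short computation above. The only remaining care is confirming that $\Pi\ge0$, which I would read off from the sign of $\Pi'$ together with $\Pi(1)=0$.
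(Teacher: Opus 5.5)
Your proof is correct and is essentially the paper's argument. Positive definiteness and radial unboundedness come from $V_1$ together with $\Pi\ge 0$ and $\Psi(Y/X^{a})\ge 0$. Your derivative computation, with the exact cancellation $\dot Z/Z=X-1$ and the removal of the cross term $(X-1)(Y-1)X^{-a}$ by $\Pi$, reproduces the paper's $\dot V$.

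One caveat: justify $\Pi\ge 0$ with your fallback argument (the sign of $\Pi'$ and $\Pi(1)=0$), not with the integral identity in \eqref{Pi(X)}, because that identity is false as written. It would force $\Pi''(X)=X^{-\frac{1+2\varepsilon}{1+\varepsilon}}>0$. The explicit formula instead gives $\Pi''(X)=\frac{1+a-aX}{(1+\varepsilon)X^{a+2}}$, which is negative for $X>\frac{1+2\varepsilon}{\varepsilon}$, consistent with $\Pi$ not being convex. A correct positive representation, obtained by integrating $\Pi'$ from $1$ to $X$, is $\Pi(X)=\frac{(X-1)^2}{1+\varepsilon}\int_0^1\frac{\theta\,d\theta}{(1+\theta(X-1))^{\frac{1+2\varepsilon}{1+\varepsilon}}}$.
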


\begin{proof}
The positive definiteness and radial unboundedness of $V(X,Y)$ (illustrated in Figure \ref{fig:V-both}) is evident from those properties being held by $V_1$ and from the nonnegativity of $\Pi(X)$ and $\Psi\left({Y}/{X^{\frac{\varepsilon}{1+\varepsilon}}}\right)$. Interestingly, $\Pi$ is positive definite, radially unbounded, but not convex. The derivative of $V$, 
\begin{equation}
\dot V = -\frac{(X-1)\Bigl(X^{\frac{\varepsilon}{1+\varepsilon}}-1\Bigr) }{X^{\frac{\varepsilon}{1+\varepsilon}}}-(1+\varepsilon)\varepsilon(Y-1)^2
\end{equation}
is  evidently negative definite.
\end{proof}

\begin{figure}[t]
\centering
\includegraphics[width=.7\columnwidth]{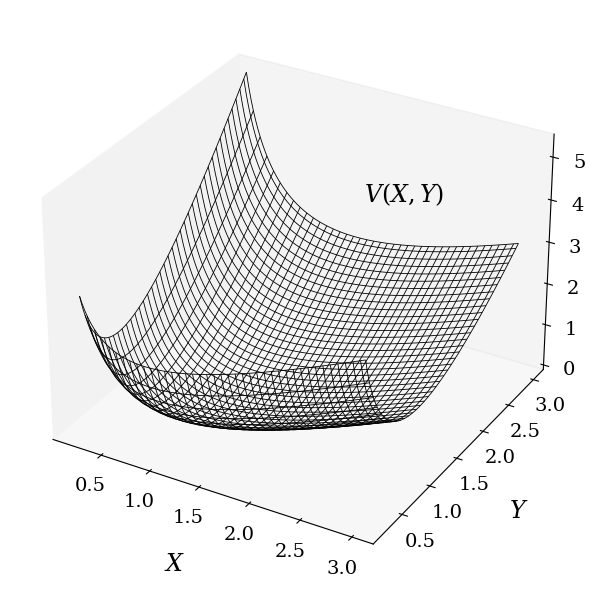}
\caption{$V(X,Y)$ in \eqref{clf2} for $\varepsilon=0.5$.}
\label{fig:V-both}
\end{figure}

%\newpage

The CLF component $\Pi(X)$ in \eqref{V2-both} is constructed in a systematic manner. After computing
\begin{eqnarray}
\dot V_2
&=&(Y-1)\left[-(1+\varepsilon)X\Pi'(X)+\frac{X-1}{X^\alpha}\right]
\nonumber\\
&&
\underbrace{\displaystyle -
%\left(
\frac{(X-1)\left(X^\alpha-1\right)}{X^\alpha}}_{<0\,, \ \forall X\in(0,1)\cup (1,\infty)}
%-1\right)
\,,\qquad
\alpha=\frac{\varepsilon}{1+\varepsilon}\,,
\end{eqnarray}
we choose $\Pi(X)$ so that the indefinite cross-term that has a factor of $Y-1$ gets cancelled, namely, we choose
\begin{equation}
\Pi'(X)%=\frac{X-1}{(1+\varepsilon)\,X^{1+\alpha}}
=\frac{X-1}{(1+\varepsilon)\,X^{\,1+\frac{\varepsilon}{1+\varepsilon}}}\,,
\end{equation}
and after integrating it arrive at the positive definite expression \eqref{Pi(X)}. It is in the same systematic manner that the component $\Psi(1/X)$ in \eqref{V2} was arrived at.

\begin{figure}[t]
\centering
\includegraphics[width=.7\columnwidth]{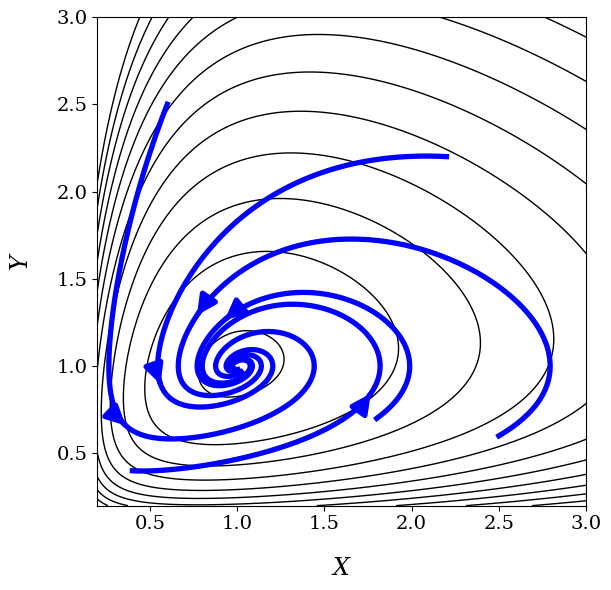}
\caption{Level sets of $V(X,Y)$ in \eqref{clf2} for $\varepsilon=0.9$ and trajectories of the closed-loop system \eqref{pp-both-closed}. The level sets of $V$ may only appear to be like the open-loop trajectories in Figure \ref{fig:open-loop} but a closer inspection reveals that they are definitely not.}
\label{fig:V-both-levels}
\end{figure}

As expected, the CLF \eqref{clf2} has level sets that get dense near the extinction state, as shown in Figure \ref{fig:V-both-levels}, which also shows some trajectories for $\varepsilon = 0.9.$

\section{Bonus: Forwarding-Based and Backstepping-Based Designs}
\label{sec-bonus}

In this section we go beyond the study of strictification of CLFs under given feedback laws. We explore new designs of feedback laws and strict CLFs for predator-prey dynamics under the restriction of positive harvesting.

Since feedback is designed along with a CLF, one has to differentiate between CLFs with unrestricted $U\in\mathbb{R}$ and CLFs with $U>0$. This section is dedicated to efforts towards the latter, harder-to-find CLFs.

\subsection{Forwarding CLF}

If one defines $x=\ln(X), \eta=-\ln(Y), u=U-X$, the plant \eqref{pp-sf} is rewritten as
\begin{subequations}
\label{pp-fw}
\begin{eqnarray}
\label{pp-fw}
\dot x &=& 1-{\rm e}^{-\eta} = -\phi(-\eta)
\\
\label{pp-fw}
\dot \eta &=&  u\,,
\end{eqnarray}
\end{subequations}
which is a ``nonlinear integrator chain,'' which belongs to the strict-feedforward class \cite{SEPULCHRE1997979}. There are numerous ways to (1) construct the forwarding transformation, (2) pick a target system, and (3) complete the Lyapunov analysis. After thoroughly exploring the options, we observe that the standard choices for systems whose states are real (rather than positive) and which are dominated by an integrator chain result in designs and analyses that, while mathematically impressive (employing special functions and sophisticated majorizations) are inelegant. The right forwarding approach for the predator-prey system exploits its nonlinearities and proceeds as follows. 

The forwarding transformation chosen as simple as 
\begin{equation}
\xi=x+\eta 
\end{equation}
results in
\begin{subequations} 
\begin{eqnarray}
\dot\xi  &=& -\phi(-\eta)+u,\\
\dot\eta &=& u.
\end{eqnarray}
\end{subequations}
Then we choose the forwarding feedback
\begin{equation}
u(\xi,\eta)=\phi(-\eta)-\phi(\xi),
\end{equation}
which yields the closed-loop system
\begin{subequations} 
\begin{eqnarray}
\dot\xi  &=& -\phi(\xi),\\
\dot\eta &=& \phi(-\eta)-\phi(\xi).
\end{eqnarray}
\end{subequations} 
For this system consider the positive definite radially unbounded Lyapunov function
\begin{equation}
V_0(\xi,\eta)=\Phi(\xi)+\Phi(-\eta)\,.
\end{equation}
Along closed-loop trajectories its derivative admits the completed-squares representation
\begin{equation}
\dot V_0
=
-\frac12
\Big[
\big(\phi(\xi)\big)^2
+
\big(\phi(-\eta)\big)^2
+
\big(\phi(\xi)-\phi(-\eta)\big)^2
\Big]\,,
\end{equation}
which is negative definite and radially unbounded. Hence, $V_0$ is a strict CLF on $\mathbb{R}^2$ (though the positivity of the associated feedback is yet to be examined). 

We translate this design and its conclusion into the following result.

\begin{theorem}
\label{thm3}
With the feedback law
\begin{equation}
\label{eq-Ufwd-simple}
U =  X+Y-\frac{X}{Y},
\end{equation}
the CLF
\begin{eqnarray}
\label{V-fwd-simple}
V(X,Y) &=& V_0(\ln(X/Y),\ln(1/Y))
\nonumber \\
&=&\frac{X}{Y}-1-\ln X+Y - 1\,,
\end{eqnarray}
which is positive definite at $X=Y=1$ and radially unbounded over $\{X,Y>0\}$, has the negative definite derivative
\begin{equation}
\label{eq-Vdot-fwd}
\dot V =-\frac12
\Bigg[\left(\frac{X}{Y}-1\right)^2+
(Y-1)^2 +\left(\frac{X}{Y}-Y\right)^2
\Bigg]
\end{equation}
along the solutions of \eqref{pp-sf}. 
\end{theorem}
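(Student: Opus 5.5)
The plan is to transport the forwarding-coordinate analysis carried out just above back to the original variables $(X,Y)$.

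First, with $x=\ln X$ and $\eta=-\ln Y$ we have $\xi=x+\eta=\ln(X/Y)$ and $-\eta=\ln Y$. Since $\phi(s)={\rm e}^s-1$, this gives $\phi(\xi)=X/Y-1$ and $\phi(-\eta)=Y-1$. The forwarding feedback $u=\phi(-\eta)-\phi(\xi)=Y-X/Y$, combined with $u=U-X$, yields exactly \eqref{eq-Ufwd-simple}.

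Second, I will evaluate $V_0$. Using $\Phi(s)=\phi(s)-s={\rm e}^s-1-s$, we get
\begin{equation*}
\Phi(\xi)=\frac{X}{Y}-1-\ln X+\ln Y,\qquad \Phi(-\eta)=Y-1-\ln Y .
\end{equation*}
Their sum is \eqref{V-fwd-simple}, since the $\ln Y$ terms cancel. Note that $V=\Psi(X/Y)+\Psi(Y)$. Positive definiteness at $X=Y=1$ and radial unboundedness over the open quadrant then follow from the corresponding properties of $\Psi$ in \eqref{eq-Volterra-Lyapunov}. The map $(X,Y)\mapsto(X/Y,Y)$ is a diffeomorphism of $\{X,Y>0\}$ onto itself, so $V\to\infty$ whenever $(X,Y)$ approaches the boundary of the quadrant or infinity.

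Third, I will compute the derivative. I expect to cite the completed-squares identity for $\dot V_0$ and substitute the expressions for $\phi(\xi)$ and $\phi(-\eta)$, which gives \eqref{eq-Vdot-fwd}. As an independent check I will differentiate directly. Along \eqref{pp-sf},
\begin{equation*}
\frac{d}{dt}\frac{X}{Y}=\frac{X}{Y}\bigl(1-Y-X+U\bigr),\qquad \frac{d}{dt}\ln X=1-Y,\qquad \dot Y=(X-U)Y .
\end{equation*}
With $U=X+Y-X/Y$ these become $\frac{d}{dt}(X/Y)=\frac{X}{Y}(1-X/Y)$ and $\dot Y=(X/Y-Y)Y$. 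Hence
\begin{equation*}
\dot V=-\frac{X}{Y}\left(\frac{X}{Y}-1\right)-(1-Y)+Y\left(\frac{X}{Y}-Y\right).
\end{equation*}
Setting $a=X/Y-1$ and $b=Y-1$, this reduces to $-a^2+ab-b^2=-\tfrac12\bigl[a^2+b^2+(a-b)^2\bigr]$, where $a-b=X/Y-Y$. This is \eqref{eq-Vdot-fwd}.

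The expression vanishes only when $a=b=0$, that is, only at $X=Y=1$, so $\dot V$ is negative definite. The only step needing care is this algebraic verification of the quadratic form, and it is routine. The positivity of $U$ is not asserted in the statement, so it is not needed here.
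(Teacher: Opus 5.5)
Your proposal is correct and follows essentially the same route as the paper: you substitute $\xi=\ln(X/Y)$, $\phi(\xi)=X/Y-1$, $\phi(-\eta)=Y-1$, $u=Y-X/Y$ into the forwarding analysis and use the closed loop $\dot X=(1-Y)X$, $\dot Y=X-Y^2$. Your direct differentiation check, and the observation $V=\Psi(X/Y)+\Psi(Y)$ giving positive definiteness and radial unboundedness via the diffeomorphism $(X,Y)\mapsto(X/Y,Y)$, are useful details that the paper leaves implicit.
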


\begin{proof}
Returning to the original variables, and noting that
\begin{eqnarray}
&\xi = \ln\!\left(\dfrac{X}{Y}\right)&
\\
&\phi(-\eta) = Y-1,\quad
\phi(\xi) = \dfrac{X}{Y}-1&
\\
&U = X+u\,, \quad u = Y-\dfrac{X}{Y}\,,&
\end{eqnarray}
the closed-loop system
\begin{subequations}
\begin{eqnarray}
\dot X &=& (1-Y)X\\
\dot Y &=& X-Y^2
\end{eqnarray}
\end{subequations}
yields \eqref{eq-Vdot-fwd}.
\end{proof}

\begin{figure}[t]
\centering
\includegraphics[width=.7\columnwidth]{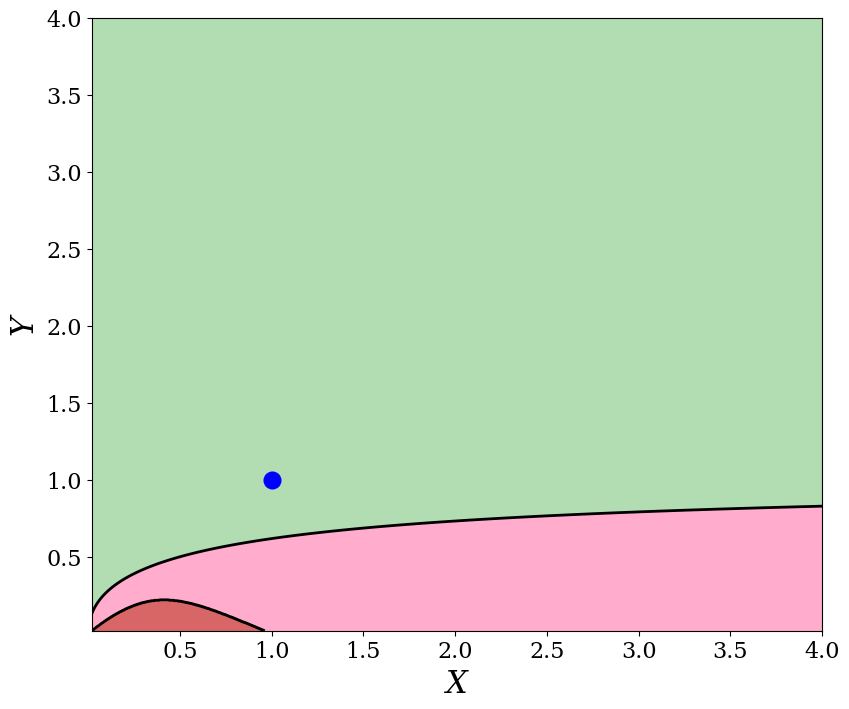}
\caption{%Feedback $U(X,Y)$ in \eqref{eq-U(X,Y)}. 
Green: $U>0$. Pink: $U<0$. Red: the set where both $L(X,Y)$ and $G(X,Y)$ in \eqref{eq-LG} are positive, namely, where no positive value of $U$ can make $\dot V$ in \eqref{eq-Vdot-U} negative, namely, where $V$ defined by \eqref{V-fwd-simple} fails to be an input-positive CLF. (Blue dot: equilibrium at $X=Y=1$.)}
\label{fig:U>0}
\end{figure}

Theorem \ref{thm3} is a nice mathematical result but, in practical terms, it is cautionary. The feedback \eqref{eq-Ufwd-simple}, which is more complex than the simple $U=Y$ in \eqref{contro-sf}, employs negative values of the harvesting {\em rate} when $X>Y^2/(1-Y)$, namely, when the predator is dominated by the prey. This set is shown in pink in Figure \ref{fig:U>0}. Note what $U<0$ represents: not the amount of introduction of the predator into the system, but the rate of addition. One could envision in some situations having a ``predator farm'' for the purpose of applying negative harvesting  for global stabilization, but this is seldom realistic. 

For the open-loop system \eqref{pp-sf} with arbitrary input \(U\), the derivative of \(V(X,Y)\) defined in \eqref{V-fwd-simple} can be written in the input-affine form
\begin{equation}
\label{eq-Vdot-U}
\dot V = L(X,Y)+G(X,Y)\,U,
\end{equation}
with
\begin{subequations}
\label{eq-LG}
\begin{eqnarray}
L(X,Y) &=& -\frac{X^2}{Y}+\frac{X}{Y}-X+Y-1+XY,\\
G(X,Y) &=& \frac{X}{Y}-Y.
\end{eqnarray}
\end{subequations}
The set in red in Figure \ref{fig:U>0} is the set where $\dot V$ can be made negative only with negative harvesting. Hence, while the feedback \eqref{eq-Ufwd-simple} is conservative in terms of the region attraction achieved with positive feedback, \eqref{V-fwd-simple} is simply not a global CLF if only positive $U$ is admissible.

The forwarding example in this section 
%aspect of Theorem \ref{thm3} 
illustrates how hard, even for the predator-prey system, let alone for more complex population dynamics, the problems of both  global stabilization and the strictification of CLFs are, due to the positivity constraint on control $U$. 

The nested saturation approach for feedforward systems might offer feedback laws that are both globally stabilizing and always positive valued. Alas, with such feedback laws one faces even greater challenges in constructing CLFs than with the integrator forwarding approach, and one typically relies on small-gain arguments rather than on CLF constructions \cite{536496}. 
%No clean CLF is known even for the predator-prey benchmark.  

Yet, we do not give up on expanding global strict CLF constructions and turn our attention next to backstepping. 

\subsection{Backstepping}

For predator-prey dynamics, we regard the predator-to-prey-surplus $Y/X$ as a backstepping transformation. Now, take the CLF \eqref{V2},
\begin{equation}
\label{V-bkst}
V(X,Y)
=\Psi\left(\frac{1}{X}\right)+\Psi\left(\frac{Y}{X}\right)\,,
\end{equation}
which was non-strict  in the context of the passivity-inspired feedback $U=Y$ in \eqref{contro-sf}, as well ass the feedback
\begin{equation}
\label{U-bkst<0}
U(X,Y)=Y+\frac{Y-X}{Y}\,,
\end{equation}
which can be regarded as a fortified version of. Then, along the solutions of \eqref{pp-sf} we have
\begin{equation}
\dot V = - \frac{(X-1)^2}{X}
-\frac{(Y-X)^2}{XY}\,.
%-\frac{(Y/X-1)^2}{Y/X}\,.
\end{equation}
While this $\dot V$ is negative definite, the backstepping feedback \eqref{U-bkst<0} employs negative harvesting for $X>(1+Y)Y$, i.e., when the predator is dominated by the prey. 

One might suspect that the backstepping CLF \eqref{V-bkst} is also hopeless like the CLF \eqref{V-fwd-simple} in terms of global validity for positive $U$. Fortunately, this pessimism is unfounded because the  open-loop dissipation dominates when $U> 0$. In open loop, the backstepping CLF \eqref{V-bkst} satisfies
\begin{equation}
\label{Vdot-with-L-G}
\dot V
= L(X,Y)+G(X,Y),U
\end{equation}
with 
\begin{subequations}
\begin{eqnarray}
\label{eq-L-def}
L(X,Y)&=&\frac{-(X-1)^2+Y(Y-X)}{X},
\\
\label{eq-G-def}
G(X,Y)&=&\frac{X-Y}{X}.
\end{eqnarray}
\end{subequations}
It is easily seen that when $G(X,Y)>0$, namely, when $X>Y$, the second term in the numerator of $L(X,Y)$, given by $Y(Y-X)$ is negative, in addition to the term $-(X-1)^2$ obviously being negative as well.  This means that the backstepping CLF candidate \eqref{V-bkst} is a global CLF with $U$ positively constrained. 

The question then is, while globally stabilizing positive feedback surely exists, can one find a feedback law that is simple and appealing? The affirmative answer is provided in the following theorem. 

\begin{theorem}
\label{thm5}
The feedback
\begin{equation}
\label{U-bkst<0+}
U(X,Y)=\frac{Y^{2}}{X}
\end{equation}
is a positive-valued global asymptotic stabilizer of the system \eqref{pp-sf} at $X=Y=1$ on the domain $\{X,Y>0\}$.\footnote{As illustrated by Figure \ref{fig:U>0}.}
\end{theorem}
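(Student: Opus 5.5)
The plan is to reuse the backstepping CLF \eqref{V-bkst}, $V=\Psi(1/X)+\Psi(Y/X)$, and its input-affine derivative \eqref{Vdot-with-L-G} with $L$ and $G$ given by \eqref{eq-L-def}--\eqref{eq-G-def}. I will then show that the specific choice $U=Y^2/X$ makes $\dot V$ negative definite on $\{X,Y>0\}$.

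Positivity of $U$ is immediate, since $Y^2/X>0$ on the open quadrant, and $U=1$ at $X=Y=1$.

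Next I will confirm that $V$ is positive definite at $(1,1)$ and proper on $\{X,Y>0\}$.
\begin{itemize}
\item Each $\Psi$ term is nonnegative, and both vanish only when $1/X=1$ and $Y/X=1$, i.e., only at $X=Y=1$.
\item As $X\to 0^+$ or $X\to\infty$, the term $\Psi(1/X)=1/X-1+\ln X$ blows up.
\item With $X$ confined to a compact subset of $(0,\infty)$, letting $Y\to 0^+$ or $Y\to\infty$ drives $Y/X$ to $0$ or $\infty$, so $\Psi(Y/X)$ blows up.
\end{itemize}
Hence the sublevel sets of $V$ are compact subsets of the open quadrant. This also gives forward completeness and invariance of the quadrant for the closed loop.

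The key step is the derivative computation. Substituting $U=Y^2/X$ into $\dot V=L+GU$ gives
\begin{equation*}
\dot V=-\frac{(X-1)^2}{X}+\frac{Y(Y-X)}{X}+\frac{(X-Y)Y^2}{X^2}.
\end{equation*}
Grouping the last two terms yields
\begin{equation*}
\frac{Y(Y-X)}{X}\Bigl(1-\frac{Y}{X}\Bigr)=-\frac{Y(Y-X)^2}{X^2},
\end{equation*}
so that
\begin{equation*}
\dot V=-\frac{(X-1)^2}{X}-\frac{Y\,(Y-X)^2}{X^2}.
\end{equation*}
This expression vanishes only if $X=1$ and $Y=X$, that is, only at the equilibrium. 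It is therefore negative definite on $\{X,Y>0\}$. The intuition is that the feedback is chosen so that the indefinite term $Y(Y-X)/X$ in $L$ is exactly multiplied by $(1-Y/X)$, which converts it into a square.

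Combining properness of $V$ with negative definiteness of $\dot V$, the standard Lyapunov theorem gives global asymptotic stability of $(1,1)$ with region of attraction $\{X,Y>0\}$, and no LaSalle argument is needed. The only step I regard as a real obstacle is double-checking the expressions for $L$ and $G$ in \eqref{eq-L-def}--\eqref{eq-G-def}. I will re-derive them directly from $\frac{d}{dt}(1/X)=(Y-1)/X$ and $\frac{d}{dt}(Y/X)=(Y/X)(X-U-1+Y)$ before the cancellation. Everything else is routine algebra.
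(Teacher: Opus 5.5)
Your proposal is correct and follows the paper's proof: it uses the backstepping CLF $V=\Psi(1/X)+\Psi(Y/X)$, substitutes $U=Y^2/X$ into $\dot V=L+GU$ to obtain exactly the paper's $\dot V=-\frac{(X-1)^2}{X}-\frac{(Y-X)^2}{X}\,\frac{Y}{X}$, and notes that positivity of $U$ is immediate on the quadrant. Your explicit properness argument (compact sublevel sets in the open quadrant, hence invariance and forward completeness despite the singularity of $U$ at $X=0$) is a worthwhile detail that the paper leaves implicit.
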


\begin{proof}
In closed loop, the backstepping CLF \eqref{V-bkst} has a negative definite derivative
\begin{equation}
\dot V
=-\frac{(X-1)^2}{X}-\frac{(Y-X)^2}{X} \, \frac{Y}{X}\,.
\end{equation}
The positivity of $U(X,Y)={Y^{2}}/{X}$ is evident on the positive invariant set $\{X,Y>0\}$  and is inherent to the design, not a result of saturation or projection. 
\end{proof}

\begin{figure}[t]
\centering
\includegraphics[width=.9\columnwidth]{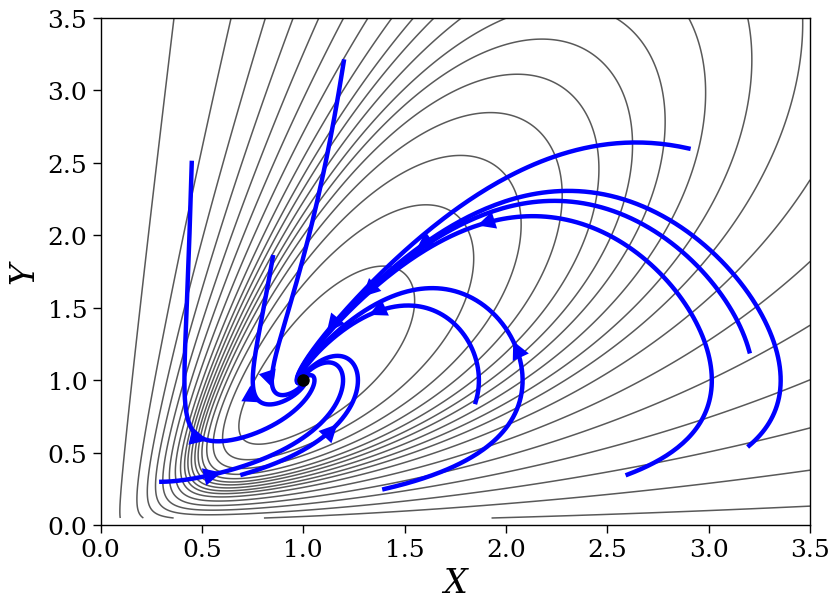}
\caption{Trajectories (blue) of the system \eqref{pp-sf} with the positive backstepping controller \eqref{U-bkst<0+}, overlaid on top of the level sets of the CLF \eqref{V-bkst}.}
\label{fig:U>0}
\end{figure}

It is important not to miss what has transpired here. The conventionally designed backstepping feedback \eqref{U-bkst<0} has to resort to negative harvesting for global stabilization. The unconventional, structure-exploiting backstepping feedback \eqref{U-bkst<0+} is globally stabilizing with $U>0$. 

This ``unconventional backstepping'' doesn't aim for a target system that is linear, with a negative diagonal and skew-symmetric off-diagonal coefficients. Instead, multiple the structure of the problem informs us to pursue  the entirely non-linear target system
\begin{subequations}
\label{target-bkst}
\begin{eqnarray}
\dot x &=& - \phi(x) +\frac{\phi(x)}{\phi(-x)}\phi(z)\\
\dot z &=& \phi(x) + {\rm e}^{y+z}\phi(-z)\,,
\end{eqnarray}
\end{subequations}
with highly asymmetric dissipative nonlinearities $\phi(x),\phi(-z)$. The CLF that corresponds to such a target system is not the usual sum of squares but the sum
\begin{equation}
V = \Phi(-x) +\Phi(z)\,,
\end{equation}
which is the alternative way of writing \eqref{V-bkst}, and results in
\begin{equation}
\dot V
=-4\left[\sinh^{2}\!\left(\frac{x}{2}\right)+e^{\,y+z}\sinh^{2}\!\left(\frac{z}{2}\right)\right]\,.
\end{equation}
In the $x,y,z$ variables, the feedback  \eqref{U-bkst<0+} is written as
\begin{equation}
U = {\rm e}^{y+z}
\end{equation}
and obviously positive. In summary, who would have guessed in initially facing the predator-prey system with backstepping that such a large deviation from this method's conventional version is what it takes to acheive global stabilization, with positive controls?

There is no general method here. The success of \eqref{U-bkst<0} is opportunistic. And, with a good CLF, some controllers may only partially succeed, while there may exist other controllers that succeed fully. Their discovery is up to the designer. As is the discovery of a CLF that is global and strict under the restriction to positive inputs. 

\section{Conclusions}

New CLF constructions, both for strictification under preselected feedbacks and done in parallel with feedback design, 
%for a population dynamics benchmark 
are introduced, for a benchmark model in population dynamics. 
%First, which comprise two non-strict Lyapunov constructions, to compose one that is strict, are  introduced in the first part of this note. 

{\bf Strictification Methodology.}  Sections \ref{sec-sf} and \ref{sec-both} present  constructions of second Lyapunov function, which exhibit dissipativity of the (prey) state whose dissipation is not captured in the first Lyapunov function. 
\begin{itemize}
    \item \underline{Step 1.} A  backstepping-inspired change of coordinates, the predator-to-prey ratio $Y/X$, is proposed for the strictification.
    \item \underline{Step 2.} An augmentation of the Lyapunov function, dependent on the prey state $X$ alone, and given specifically by the functions $\Psi(1/X)$ in Theorem \ref{thm-sf} and the more complex $\Pi(X)$ in Theorem \ref{thm-both}, is crucial component to be constructed for making the candidate $V_2(X,Y)$ an actual Lyapunov function and for yielding strictness. 
\end{itemize}

{\bf CLF Design.} Strictification is a process in which a feedback has already been preselected and the quest for a better CLF is only a matter of analysis. We  explore (in Sec. \ref{sec-bonus}) also a more ambitious goal of concurrently seeking strict CLFs and positive stabilizing feedbacks. In this larger design space, CLF design informed by classical methods like forwarding and backstepping, when undertaken in a manner customized to systems with positive states, might succeed. But success with the input  maintained positive is much harder. We have presented one successful CLF+positive-feedback construction (Thm. \ref{thm5}), from among multiple attempts. Some designs that succeed at global stabilization but fail at achieving positivity of control across the entire state space are shown in Sec. \ref{sec-bonus}, to illustrate the challenge. 

If these ideas do turn out to carry general potential, they may permit extensions, within the realm of control of population dynamics, along the lines of Karafyllis's recent work \cite{karafyllis2025resultsglobalattractivityinterior}, where his generalizations are still thus far reliant on classical (separable) Volterra-Lyapunov (linear+log) ingredients. With CLF strictness a gateway to robustness, in a hopeful scenario the proposed approach may lead from strict CLFs to ISS-CLFs, which is at present achieved, for particular classes of population dynamics, in notable results in \cite{malisoff2016stabilization,9740520,ItoMalMaz}. 

Our results are not universal but they are structural, not limited to population dynamics. Designers for positive systems, bilinear in a positive input, and with severely asymmetric dissipation, might be able to leverage our ideas.

\section*{References}
\bibliography{sample.bib}

@ARTICLE{536496,
  author={Teel, A.R.},
  journal={IEEE Transactions on Automatic Control}, 
  title={A nonlinear small gain theorem for the analysis of control systems with saturation}, 
  year={1996},
  volume={41},
  number={9},
  pages={1256-1270},
  doi={10.1109/9.536496}}

@article{SEPULCHRE1997979,
title = {Integrator forwarding: A new recursive nonlinear robust design},
journal = {Automatica},
volume = {33},
number = {5},
pages = {979-984},
year = {1997},
issn = {0005-1098},
doi = {https://doi.org/10.1016/S0005-1098(96)00249-X},
author = {Rodolphe Sepulchre and Mrdjan Jankovic and Petar V. Kokotovic}
}

@ARTICLE{9744516,
  author={Karafyllis, Iasson and Krstic, Miroslav},
  journal={IEEE Transactions on Automatic Control}, 
  title={Spill-Free Transfer and Stabilization of Viscous Liquid}, 
  year={2022},
  volume={67},
  number={9},
  pages={4585-4597},
  doi={10.1109/TAC.2022.3162551}}

@misc{karafyllis2025resultsglobalattractivityinterior,
      title={Results for Global Attractivity of Interior Equilibrium Points for Lotka-Volterra Systems}, 
      author={Iasson Karafyllis},
      year={2025},
      eprint={2512.11384},
      archivePrefix={arXiv},
      primaryClass={math.DS},
      url={https://arxiv.org/abs/2512.11384}, 
}

@ARTICLE{ItoMalMaz,
author = {Hiroshi Ito and Michael Malisoff and Frédéric Mazenc},
title = {Strict Lyapunov functions and feedback controls for SIR models with quarantine and vaccination},
journal = {Discrete and Continuous Dynamical Systems - B},
volume = {27},
number = {12},
pages = {6969-6988},
year = {2022},
issn = {1531-3492},
doi = {10.3934/dcdsb.2022029}
}

@ARTICLE{9740520,
  author={Ito, Hiroshi},
  journal={IEEE Transactions on Automatic Control}, 
  title={Strict Smooth Lyapunov Functions and Vaccination Control of the SIR Model Certified by ISS}, 
  year={2022},
  volume={67},
  number={9},
  pages={4514-4528},
  doi={10.1109/TAC.2022.3161395}}

@ARTICLE{11080060,
  author={Veil, Carina and Krstić, Miroslav and Karafyllis, Iasson and Diagne, Mamadou and Sawodny, Oliver},
  journal={IEEE Transactions on Automatic Control}, 
  title={Stabilization of Predator–Prey Age-Structured Hyperbolic PDE When Harvesting Both Species is Inevitable}, 
  year={2026},
  volume={71},
  number={1},
  pages={123-138},
  doi={10.1109/TAC.2025.3589108}}

@article{malisoff2016stabilization,
  title={Stabilization and robustness analysis for a chain of exponential integrators using strict Lyapunov functions},
  author={Malisoff, Michael and Krstic, Miroslav},
  journal={Automatica},
  volume={68},
  pages={184--193},
  year={2016},
  publisher={Elsevier}
}
\bibliographystyle{IEEEtranS}

\end{document}